\algrenewcommand\alglinenumber[1]{\scriptsize#1:}
\algnewcommand{\LineComment}[1]{\Statex \(\triangleright\) #1}
\newtheorem{proposition}{Proposition}
\def\BibTeX{{\rm B\kern-.05em{\sc i\kern-.025em b}\kern-.08em
    T\kern-.1667em\lower.7ex\hbox{E}\kern-.125emX}}
\begin{document}

\title{RIDAS: A Multi-Agent Framework for AI-RAN with Representation- and Intention-Driven Agents\\
}

\author{
Kuiyuan Ding\textsuperscript{*}, Caili Guo\textsuperscript{†}, Yang Yang\textsuperscript{*} and Jianzhang Guo\textsuperscript{‡}\\
\textsuperscript{*} Beijing Key Laboratory of Network System Architecture and Convergence, School of Information and \\ Communication Engineering, Beijing University of Posts and Telecommunications, Beijing 100876, China \\
\textsuperscript{†} Beijing Laboratory of Advanced Information Networks, School of Information and Communication Engineering, \\ Beijing University of Posts and Telecommunications, Beijing 100876, China \\
\textsuperscript{‡}\ ‌China Telecom Digital Technology Co., Ltd.‌, Beijing 100035, China \\
\{dingkuiyuan, guocaili, yangyang01\}@bupt.edu.cn, guojz6@chinatelecom.cn
}

\maketitle

\begin{abstract}
Sixth‐generation (6G) networks demand tight integration of artificial intelligence (AI) into radio access networks (RANs) to meet stringent quality‐of‐service (QoS) and resource‐efficiency requirements. Existing solutions struggle to bridge the gap between high‐level user intents and the low‐level, parameterized configurations required for optimal performance. To address this challenge, we propose RIDAS, a multi‐agent framework composed of representation‐driven agents (RDAs) and an intention‐driven agent (IDA). RDAs expose open interface with tunable control parameters—rank and quantization bits, enabling explicit trade‐offs between distortion and transmission rate. The IDA employs a two‐stage planning scheme (bandwidth pre‐allocation and reallocation) driven by a large language model (LLM) to map user intents and system state into optimal RDA configurations. Experiments demonstrate that RIDAS supports 36.47\% more users than WirelessAgent under equivalent QoS constraints. These results validate ability of RIDAS to capture user intent and allocate resources more efficiently in AI‐RAN environments. Code is available on: \href{https://github.com/echojayne/RIDAS.git}{https://github.com/echojayne/RIDAS.git}
\end{abstract}

\begin{IEEEkeywords}
6G, AI-RAN, AI agents, resource allocation, large language models.
\end{IEEEkeywords}

\section{Introduction}
Sixth‐generation (6G) networks envision a profound integration of artificial intelligence (AI) into communication infrastructures, thereby imposing more stringent requirements on the radio access network (RAN). In response, the concept of AI‐enabled RAN (AI‐RAN) has emerged, leveraging advanced AI methodologies to endow the RAN with enhanced intelligence—enabling higher resource utilization and improved quality of service (QoS) \cite{kundu2025airantransformingranaidriven}. AI‐RAN aspires to automate network management by translating high‐level business objectives or user intents into concrete network configurations and policy directives. However, the use of conventional AI techniques in AI‐RAN reveals a substantial gap between user intents, typically expressed in natural language, and the complex, parameterized configurations required for RAN deployment.

As AI technologies and hardware have advanced, LLMs have demonstrated exceptional performance in general-purpose domains. Their strong capability for intent understanding makes them promising candidates for enhancing AI-RAN. However, standalone LLMs face three key limitations: They cannot efficiently process multimodal data, dynamically decompose complex tasks, or interface with specialized tools, all of which impede their deployment in complex wireless environments \cite{tong2024wirelessagentlargelanguagemodel}.

To overcome these obstacles, LLM-based agents have emerged. By augmenting LLMs with modular functionalities—such as environmental data perception and external tool integration—these agents can interpret complex wireless contexts, make informed decisions, and execute appropriate actions. There has been many preliminary explorations that engage LLM-based agents in RAN. The authors in \cite{bao2025llmhricllmempoweredhierarchicalran} proposed the LLM-empowered hierarchical RAN intelligent controllers (RICs) (LLM-hRIC) framework to improve the collaboration between RICs in open RAN. In \cite{tong2024wirelessagentlargelanguagemodel}, the authors introduce a framework called WirelessAgent harnessing LLMs to create autonomous AI agents for diverse wireless network tasks.

Despite LLM-based agents advances, existing studies largely overlook how high-level, intention-driven LLM agents which serves as the “brain” of AI-RAN can effectively orchestrate underlying AI models, particularly those for data representation that lack natural-language capabilities. Specifically, there are two primary challenges that must be addressed:

\begin{itemize}
    \item Challenge 1: How can underlying AI models be designed with an open interface that exposes tunable operational parameters, enabling external, high-level control over the trade-off between resource efficiency and QoS?
    \item Challenge 2: Given such a controllable interface, what framework should an LLM-based agent employ to translate high-level user intents into a sequence of concrete control actions, dynamically adapting to network status to orchestrate the underlying models?
\end{itemize}

In this article, we propose RIDAS, a multi‐agent framework for AI‐RAN that consists of representation‐driven agents (RDAs) and an intention‐driven agent (IDA). RIDAS enables the RAN to allocate bandwidth resources efficiently so as to serve as many users as possible while satisfying their QoS (task performance) requirements. Specifically, we first design the RDA, deployed at the user end, which employs sign‐value‐independent decomposition (SVID) \cite{xu2024onebitextremelylowbitlarge} to represent the source; by adjusting the decomposition rank and the number of quantized bits, both the transmission rate and the QoS performance of the RDA can be precisely controlled. Second, we introduce the IDA, which takes the QoS requirements of users as the intention input and employs a two‐stage planning scheme—bandwidth pre‐allocation and bandwidth reallocation—to dynamically adjust the control parameters of RDAs, thereby minimizing bandwidth consumption while meeting user QoS demands. Experimental results show that RIDAS can support 36.47\% more users than the WirelessAgent framework under equivalent QoS constraints, demonstrating that RIDAS effectively captures user intent and allocates system resources more efficiently.

\section{System Model and Problem Formulation}\label{sec2}
\begin{figure}[t]
    \centering
    \includegraphics[width=\linewidth]{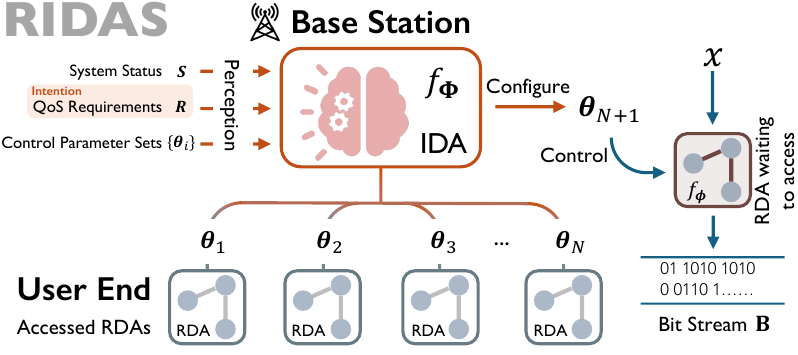}
    \caption{Overall architecture of proposed RIDAS framework.}
    \label{fig:system_model}
\end{figure}

The overall architecture of RIDAS framework is shown in Fig. \ref{fig:system_model}. At the user end (UE), there are $N$ RDAs connected to the base station (BS), each governed by a distinct control parameter $\boldsymbol{\theta}_1, \boldsymbol{\theta}_2, \cdots, \boldsymbol{\theta}_N$. When a new RDA attempts to access the BS, the IDA, deployed at the BS, determines its control parameter $\boldsymbol{\theta}_{N+1}$ based on the current system state $S$, the parameter sets of the existing RDAs $\left\{\boldsymbol{\theta}_i \mid i = 1, \cdots, N\right\}$ and the specific requirements $R$ of the new RDA (e.g., QoS constraints). Denoting the IDA as a mapping function $f_{\boldsymbol{\Phi}}$ parameterized by $\boldsymbol{\Phi}$, this process can be formulated as follows:
\begin{equation}
\boldsymbol{\theta}_{N+1} = f_{\boldsymbol{\Phi}} \big(S, R, \left\{\boldsymbol{\theta}_i\right\} \big).
\end{equation}

The configured control parameter $\boldsymbol{\theta}_{N+1}$ governs how the new RDA represents messages. Specifically, when a message $x$ needs to be encoded, the RDA denoted by $f_{\boldsymbol{\phi}}$ and parameterized by $\boldsymbol{\phi}$ generates the encoded bit stream $\mathcal{B}$ under the guidance of $\boldsymbol{\theta}_{N+1}$. This process is formulated as:
\begin{equation}
\mathcal{B} = f_{\boldsymbol{\phi}}(x;\boldsymbol{\theta}),
\end{equation}
where $\mathcal{B}$ represents the output bit stream.

The design of the RDA should ensure that both the quality and the quantity of the generated representations can be effectively controlled by the associated control parameter $\boldsymbol{\theta}_{N+1}$. Here, the quality of the representations reflects the level of distortion, while the quantity measured by the length of the encoded bit stream $\mathcal{B}$ corresponds to the transmission rate. Therefore, Challenge 1 lies in identifying a feasible mapping that jointly satisfies these objectives by balancing representation fidelity and communication efficiency.

The objective of designing the IDA is to determine an optimal control parameter $\boldsymbol{\theta}_{N+1}$ that achieves a balance between transmission rate and distortion. This goal can be formulated as the following constrained optimization problem:

\begin{equation}
\label{optimization_problem}
\begin{aligned}
\boldsymbol{\theta}_{N+1}^*
&= \arg\min_{\boldsymbol{f_{\boldsymbol{\Phi}}}}\;
\underbrace{\mathbb{E}_{x\sim p_X}[\,\lvert \mathcal{B}\rvert\,]}_{\mathcal{R}(\boldsymbol{\theta}_{N+1})} \\
\text{s.t.}\quad
&\sum_{i=1}^{N+1}\mathcal{R}(\boldsymbol{\theta}_{i}) \;\le\; B_{\max}(S),
\\
&\;\;\mathcal{B} = f_{\boldsymbol{\phi}}(x;\boldsymbol{\theta}_{N+1}),\\
&\;\;D(\boldsymbol{\theta}_{N+1}) \;\le\; D_{\mathrm{req}}(R),\\
&\;\;\boldsymbol{\theta}_{N+1} = f_{\boldsymbol{\Phi}}\big(S, R, \left\{\boldsymbol{\theta}_i,i=1,\cdots,N\right\}\big),
\end{aligned}
\end{equation}
where $D(\boldsymbol{\theta})$ and $\mathcal{R}(\boldsymbol{\theta})$ denote the average distortion and rate under the control parameter $\boldsymbol{\theta}$, respectively, $B_{\max}(S)$ is the total bandwidth budget imposed by the current system state $S$, $D_{\mathrm{req}}(R)$ is the distortion requirement determined by the QoS constraint $R$ of the new RDA and $\lvert \mathcal{B} \rvert$ denotes the length of the generated bit stream $\mathcal{B}$.

By solving the optimization problem in Eq.~\eqref{optimization_problem} through the design of a suitable IDA function $f_{\boldsymbol{\Phi}}$, we aim to effectively address Challenge 2.

\section{Proposed RIDAS Framework}
\begin{figure*}[t]
    \centering
    \includegraphics[width=0.8\linewidth]{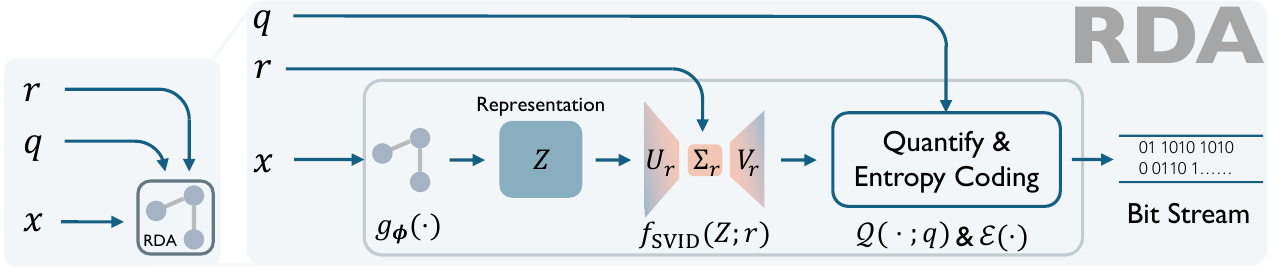}
    \caption{The architecture of RDA.}
    \label{fig:RDA}
\end{figure*}
The overall architecture of the proposed RIDAS framework is illustrated in Fig.~\ref{fig:system_model} which comprises two primary components: RDAs at the user end and an IDA at BS. The RDAs expose interfaces parameterized by $\boldsymbol{\theta}$, which the IDA configures by computing a near-optimal control parameter $\boldsymbol{\theta}^*$. This process ensures that user QoS requirements are satisfied while minimizing system resource utilization.

In this section, we first present the design of the RDAs, and then provide a detailed description of the overall IDA architecture.
\subsection{The design of RDA}
As illustrated in Fig. \ref{fig:RDA}, the RDA architecture comprises a well-trained deep neural network (DNN) denoted as $g_{\boldsymbol{\phi}}$ parameterized by $\boldsymbol{\phi}$, followed by the SVID module and subsequent quantization and entropy‐coding stages.
\subsubsection{DNN}
The DNN within the RDA serves as the core component for data representation and fundamentally determines its effectiveness. In our framework, we abstract away the specifics of the DNN architecture and training procedure, assuming that it produces sufficiently efficient representations. Instead, we concentrate on the downstream processing of these representations, providing a programmable interface that enables precise control over both representation quality and transmission efficiency.
\subsubsection{SVID}
The representation space produced by the DNN is typically very high-dimensional, and directly transmitting these raw representations would incur a substantial bandwidth overhead. To mitigate this, we innovatively leverage the SVID method introduced in \cite{xu2024onebitextremelylowbitlarge}, which was originally developed for decomposing DNN weights. Concretely, let the output representation of the DNN be denoted as $Z \in \mathbb{R}^{m \times n}$, and choose a target rank $r \ll \min(m,n)$. We then form the following rank-r approximation:

\begin{equation}
\label{SVID}
    Z \approx Z_{\rm sign} \odot \big(U_r\Sigma_r V_r^T \big),
\end{equation}

In Eq. (\ref{SVID}), $Z_{\rm sign} = \mathrm{sign}(Z)\in\{-1,1\}^{m\times n}$ is the elementwise sign matrix of $Z$, where
$$\mathrm{sign}(z_{ij}) =
\begin{cases}
+1, & z_{ij}\ge0,\\
-1, & z_{ij}<0.
\end{cases}$$
The operator $\odot$ denotes the Hadamard (elementwise) product.  The factorization $U_r\Sigma_r V_r^{T}$ is the first $r$ singular value decomposition (SVD) of $\lvert Z\rvert$, the matrix obtained by taking the absolute value of each entry of $Z$. The matrices $U_r\in\mathbb{R}^{m\times r}$ and $V_r\in\mathbb{R}^{r\times n} $ are the matrices of the first $r$ left and right singular vectors of $\lvert Z\rvert$, respectively.  Finally,
$\Sigma_r = \mathrm{diag}(\sigma_1,\sigma_2,\ldots,\sigma_{r})\in\mathbb{R}^{r\times r}$
is the diagonal matrix of singular values ordered as $\sigma_1\ge\sigma_2\ge\cdots\ge\sigma_r\ge0$.

The following proposition is the reason why we choose SVID over the conventional SVD method.

\begin{proposition}\label{prop:2}
Given a matrix $\mathbf{W}$ and its element‐wise absolute value $|\mathbf{W}|$, let
\[
\mathbf{W}
=
\mathbf{W}_{\mathrm{sign}}\odot|\mathbf{W}|.
\]
We decompose these as
\[
\mathbf{W}
=
\mathbf{a}\mathbf{b}^\top + \mathbf{E}_1,
\quad
|\mathbf{W}|
=
\tilde{\mathbf{a}}\tilde{\mathbf{b}}^\top + \mathbf{E}_2,
\]
where $a,b$ and $\tilde a, \tilde b$ are vectors of appropriate dimensions given by SVD, each $\mathbf{E}_i$ is the corresponding error matrix.  In terms of the Frobenius norm, the SVID approximation is closer to the original matrix $\mathbf{W}$:
\begin{equation}\label{eq:prop2}
\bigl\lVert
\mathbf{W}
-
\mathbf{W}_{\mathrm{sign}}\odot\tilde{\mathbf{a}}\tilde{\mathbf{b}}^\top
\bigr\rVert_{F}^{2}
\le
\bigl\lVert
\mathbf{W}
-
\mathbf{a}\mathbf{b}^\top
\bigr\rVert_{F}^{2}.
\end{equation}
\end{proposition}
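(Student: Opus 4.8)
The plan is to reduce both sides of \eqref{eq:prop2} to best rank-one approximation errors and then compare those two errors through the spectral norm. First I would use the fact that Hadamard multiplication by the sign matrix $\mathbf{W}_{\mathrm{sign}}\in\{-1,1\}^{m\times n}$ is a Frobenius isometry: for any matrix $M$ of matching size, $(\mathbf{W}_{\mathrm{sign}}\odot M)_{ij}^{2}=M_{ij}^{2}$, so $\lVert \mathbf{W}_{\mathrm{sign}}\odot M\rVert_F=\lVert M\rVert_F$. Since $\mathbf{W}=\mathbf{W}_{\mathrm{sign}}\odot|\mathbf{W}|$, the left-hand side of \eqref{eq:prop2} becomes
\[
\bigl\lVert \mathbf{W}_{\mathrm{sign}}\odot\bigl(|\mathbf{W}|-\tilde{\mathbf{a}}\tilde{\mathbf{b}}^{\top}\bigr)\bigr\rVert_F^{2}
=\bigl\lVert |\mathbf{W}|-\tilde{\mathbf{a}}\tilde{\mathbf{b}}^{\top}\bigr\rVert_F^{2}
=\lVert \mathbf{E}_2\rVert_F^{2},
\]
so the claimed inequality collapses to $\lVert \mathbf{E}_2\rVert_F^{2}\le\lVert \mathbf{E}_1\rVert_F^{2}$, i.e. the optimal rank-one approximation error of $|\mathbf{W}|$ is at most that of $\mathbf{W}$.

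Next I would invoke the Eckart–Young theorem. Because $\mathbf{a}\mathbf{b}^{\top}$ and $\tilde{\mathbf{a}}\tilde{\mathbf{b}}^{\top}$ are the leading rank-one SVD terms of $\mathbf{W}$ and $|\mathbf{W}|$, they are optimal rank-one approximations in Frobenius norm, so
\[
\lVert \mathbf{E}_1\rVert_F^{2}=\lVert\mathbf{W}\rVert_F^{2}-\sigma_1(\mathbf{W})^{2},\qquad
\lVert \mathbf{E}_2\rVert_F^{2}=\lVert\,|\mathbf{W}|\,\rVert_F^{2}-\sigma_1(|\mathbf{W}|)^{2},
\]
where $\sigma_1(\cdot)$ denotes the largest singular value. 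Since taking absolute values entrywise leaves every squared entry unchanged, $\lVert\mathbf{W}\rVert_F^{2}=\lVert\,|\mathbf{W}|\,\rVert_F^{2}$, and the whole proposition reduces to the single inequality $\sigma_1(\mathbf{W})\le\sigma_1(|\mathbf{W}|)$.

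To finish, I would establish this spectral-norm comparison from the variational characterization $\sigma_1(\mathbf{W})=\max_{\lVert u\rVert=\lVert v\rVert=1}|u^{\top}\mathbf{W}v|$: for any unit vectors $u,v$,
\[
|u^{\top}\mathbf{W}v|\le\sum_{i,j}|u_i|\,|W_{ij}|\,|v_j|=|u|^{\top}|\mathbf{W}|\,|v|\le\sigma_1(|\mathbf{W}|),
\]
where the last step uses $\lVert|u|\rVert=\lVert u\rVert=1$ and $\lVert|v|\rVert=\lVert v\rVert=1$. Maximizing over $u,v$ gives $\sigma_1(\mathbf{W})\le\sigma_1(|\mathbf{W}|)$, and chaining the three reductions proves \eqref{eq:prop2}. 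I expect the only real content to be this last step; the isometry and Eckart–Young steps are bookkeeping, whereas the spectral-norm bound is where one must be careful to apply the optimality of $\sigma_1(|\mathbf{W}|)$ against the correctly normalized nonnegative vectors $|u|,|v|$. The argument extends verbatim to the rank-$r$ SVID in \eqref{SVID}: replace $\sigma_1^{2}$ by $\sum_{k=1}^{r}\sigma_k^{2}$, use the rank-$r$ Eckart–Young theorem, and the identity $\lVert\mathbf{W}\rVert_F=\lVert\,|\mathbf{W}|\,\rVert_F$ together with $\sigma_k(\mathbf{W})\le\sigma_k(|\mathbf{W}|)$ — though only the rank-one case is asserted here.
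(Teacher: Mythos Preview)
Your rank-one argument is correct and complete: the sign-Hadamard isometry, the Eckart--Young identity, and the variational bound $\sigma_1(\mathbf{W})\le\sigma_1(|\mathbf{W}|)$ combine exactly as you describe. The paper itself does not supply a proof at all---it simply refers the reader to Xu et al.\ \cite{xu2024onebitextremelylowbitlarge}---so your write-up actually provides what the paper defers.

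One caution on your closing remark about the rank-$r$ extension: the per-index inequalities $\sigma_k(\mathbf{W})\le\sigma_k(|\mathbf{W}|)$ are false for $k\ge 2$ in general. Take $\mathbf{W}=\bigl(\begin{smallmatrix}1&\phantom{-}1\\1&-1\end{smallmatrix}\bigr)$: then $\sigma_2(\mathbf{W})=\sqrt{2}$, while $|\mathbf{W}|$ has rank one, so $\sigma_2(|\mathbf{W}|)=0$. Hence the ``verbatim'' extension via termwise comparison does not go through; a rank-$r$ statement would need a different device (for instance a variational characterization of the partial sum $\sum_{k\le r}\sigma_k^2$ rather than of each $\sigma_k$ separately). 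This does not affect the proposition as stated, which is the rank-one case only.
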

\begin{proof}
For a detailed proof of this proposition, we refer the reader to the work of Xu et al. \cite{xu2024onebitextremelylowbitlarge}.
\end{proof}
As shown in Eq.(\ref{eq:prop2}), the reconstruction error of SVID is provably no greater than that of SVD. Furthermore, Proposition \ref{prop:2} can be straightforwardly generalized to rank-$r$ decompositions. Accordingly, SVID not only achieves a lower-error approximation $U_r\Sigma_r V_r^{T}$ than SVD, but also offers a controllable trade-off between representation dimension and distortion, thereby providing a preliminary solution to Challenge 1. The computational complexity of SVID is dominated by the truncated SVD step, resulting in an overall complexity of $\mathcal{O}(mnr)$. 

\subsubsection{Quantization and Entropy‐Coding}

For digital transmission, we further apply quantization and entropy encoding to the matrices $U_r$, $\Sigma_r$, and $V_r^{T}$. Specifically, we denote the SVID-based low-rank approximation as $f_{\rm SVID}(Z; r)$, the quantization process as $\mathcal{Q}(W; q)$ where $q$ indicates the number of quantization bits, and the subsequent entropy encoding as $\mathcal{E}(W_{\rm Q})$, where $W_{\rm Q}$ denotes the quantized representation. Accordingly, the RDA parameterized by $\boldsymbol{\phi}$ can be expressed as:
\begin{equation}
\label{rda}
\begin{aligned}
    \mathcal{B} & = f_{\boldsymbol{\phi}}(x;\boldsymbol{\theta}) \\
               & = \mathcal{E}\big( \mathcal{Q}\big( f_{\rm SVID}\big(g_{\boldsymbol{\phi}}(x);r\big);q \big) \big),
\end{aligned}
\end{equation}
where the control parameter $\boldsymbol{\theta}=\left\{ r,q \right\}$.

Eq. (\ref{rda}) presents a feasible implementation of the RDA. The distortion and transmission rate of the resulting representation bit stream can be effectively controlled by rank $r$ and quantization bits $q$. Specifically, larger values of $r$ and $q$ lead to higher transmission rates but lower task performance distortion, while smaller values of $r$ and $q$ result in reduced transmission rates at the cost of increased task performance distortion. Thus, this implementation provides a controllable trade-off between  resource efficiency and QoS. Up to this point, we have presented a viable instantiation of the RDA, thereby addressing Challenge 1.

\subsection{The design of IDA}

\begin{figure*}[t]
    \centering
    \includegraphics[width=0.9\linewidth]{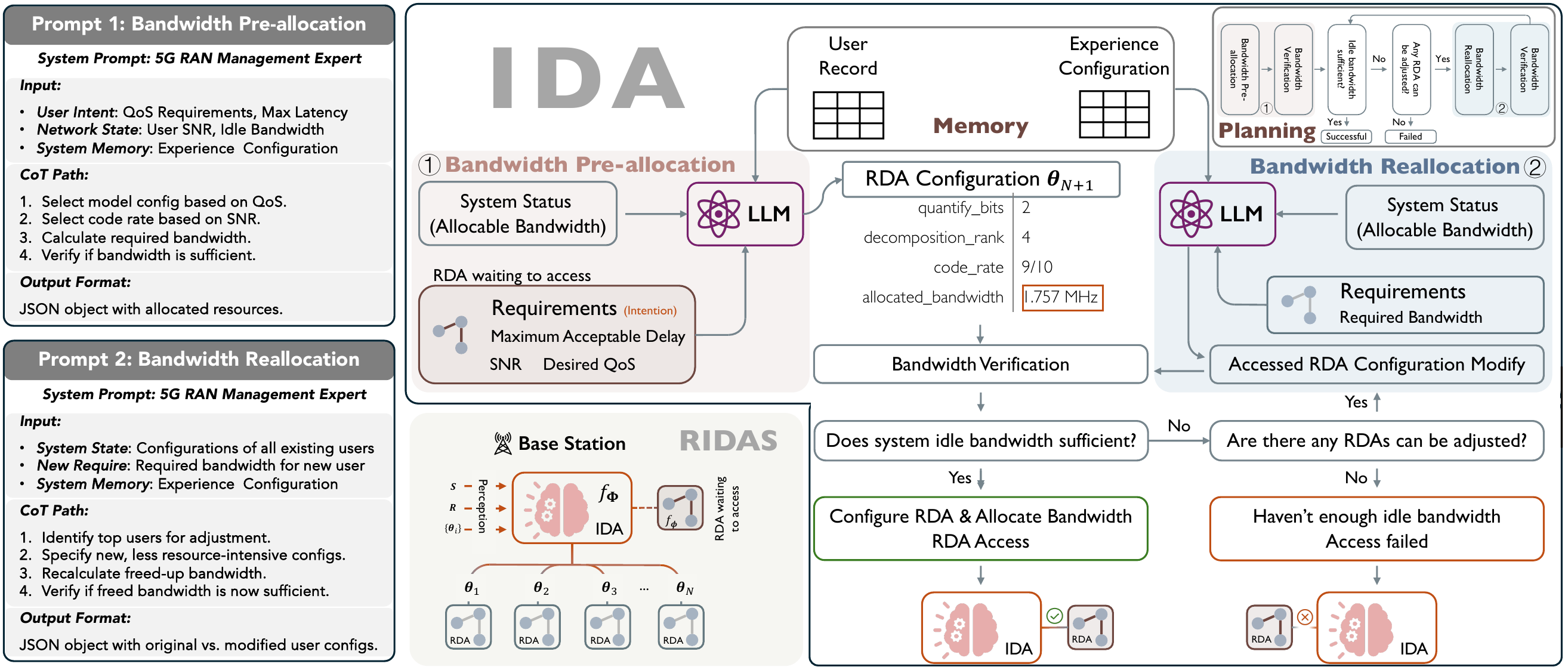}
    \caption{Overall architecture of proposed IDA.}
    \label{fig:IDA}
    \vspace{-10pt}
\end{figure*}

As illustrated in Eq. (\ref{optimization_problem}), IDA seeks to determine the optimal control parameter $\boldsymbol{\theta}_{N+1}^*$ for the new RDA, such that the resulting representation bit stream length $\lvert\mathcal{B}\rvert$ is minimized while satisfying the maximum‐distortion constraint of the RDA. As illustrate in Fig. \ref{fig:IDA}, we realize IDA as an LLM‐based agent composed of two principal routines: (1) bandwidth pre‐allocation and (2) bandwidth reallocation. Specifically, the IDA first selects $\boldsymbol{\theta}_{N+1}^*$ for the new RDA and then allocates the corresponding transmission bandwidth. In the following subsections, we first present the memory module of IDA, and then describe its planning pathway in detail.

\subsubsection{Memory}
The memory module of IDA persistently stores the current configuration set, i.e., the control parameters of all connected RDAs ${\boldsymbol{\theta}_i}$, together with the allocated bandwidth of each RDA (\textit{User Record} table in Fig. \ref{fig:IDA}) and the corresponding achieved experimental bit stream length and distortion under those configurations (\textit{Experience Configuration} table in Fig. \ref{fig:IDA}). All of this information is maintained in a real-time updating database, which serves as the perception input to the LLM-based agent.

\subsubsection{Planning}
In order to enable the LLM to generate near-optimal configurations, we have designed a dedicated planning pathway for IDA.

When a new RDA requests access to the BS, the LLM within IDA first proposes control parameter $\boldsymbol{\theta}_{N+1}$ and pre-allocates bandwidth according to the QoS requirements of the RDA (including distortion and rate requirements) and the currently available idle bandwidth of the system. During this pre-allocation stage, the prompt steers the LLM to retrieve from past experience a configuration that minimizes transmission rate while satisfying the distortion constraint. Owing to its strong instruction-following and contextual-understanding capabilities, the initial proposal of the LLM typically lies very close to the true optimum, i.e., $\boldsymbol{\theta}_{N+1}\approx\boldsymbol{\theta}_{N+1}^*$, and thus the resulting bandwidth allocation effectively balances system resource usage against user demand.

However, since LLM are prone to numerical hallucinations when performing precise calculations, we further validate the pre-allocated bandwidth by measuring the empirical transmission rate under the proposed configuration before committing to the final assignment.

If the idle bandwidth of system is insufficient to satisfy the pre-allocation request, IDA initiates a reallocation procedure across all RDAs connected to the BS rather than rejecting the new RDA directly. Because the control parameters and bandwidth assignments determined during the pre-allocation stage may not be optimal, some RDAs may hold redundant capacity. Consequently, when idle bandwidth is inadequate, the LLM of IDA evaluates whether the configuration of any connected RDA can be adjusted to free up additional resources. If such opportunities exist, IDA modifies those configurations and reallocates their bandwidth. It then reassesses the available bandwidth. If the newly available capacity meets the requirements, the new RDA is configured, and the corresponding bandwidth is provisioned for it. Otherwise, the reallocation stage repeats until either sufficient idle bandwidth is secured (connected successfully) or no further adjustments are possible (connected failed).

By employing the two-stage planning pathway of bandwidth pre‐allocation and reallocation, IDA is able to assign near‐optimal control parameters and bandwidth resources so as to minimize the representation bit‐stream length $\lvert\mathcal{B}\rvert.$ This not only conserves system resources but also ensures that the QoS requirements of each RDA are met, thereby directly addressing the optimization objective in Eq. (\ref{optimization_problem}). The concrete instantiation of IDA presented here thus constitutes a viable solution to Challenge 2.

The proposed RDA and IDA together form the RIDAS framework. On the one hand, RDAs expose an interface for controlling their representation actions, thereby satisfying diverse user QoS requirements with minimal resource consumption by adjusting the control parameter $\boldsymbol{\theta}$. On the other hand, the IDA interprets user intent to configure the control parameters of RDAs in a manner that further minimizes system resource usage. Consequently, RIDAS introduces a novel AI-RAN paradigm, demonstrating how LLM-based agents can interact with and manage the underlying AI models.

\section{Experimental Results}
\subsection{Settings}
\subsubsection{Scenario setup}
In our experiments, the total available bandwidth is set to 100 MHz. The signal-to-noise ratio (SNR) for each RDA is randomly generated in the range of 5 dB to 30 dB. The RDA code rate is selected from the set
$\left\{\tfrac12,\;\tfrac35,\;\tfrac23,\;\tfrac34,\;\tfrac45,\;\tfrac56,\;\tfrac89,\;\tfrac{9}{10}\right\}$.
According to \cite{9389782}, the end-to-end delay in 6G networks is expected to be less than 1 ms; accordingly, we impose a maximum allowable transmission delay per RDA in the interval [0.05 ms, 0.5 ms]. Moreover, we characterize  QoS requirement of each RDA by its top-1 classification accuracy:
\begin{itemize}
    \item Low: QoS demands accuracy $>$ 70\%, corresponding to shorter bit-stream lengths;
    \item Medium: QoS requires accuracy $>$ 80\%;
    \item High: QoS requires accuracy $>$ 90\%, corresponding to longer bit-stream lengths.
\end{itemize}

Under these configurations, the required bandwidth is computed as follows:
\begin{small}
\begin{equation}
\text{Bandwidth (MHz)}
= \frac{\lvert \mathcal{B} \rvert \,/\, \text{code rate}}{\text{transmission time}}
\times \frac{1}{\log_2\bigl(1 + 10^{\frac{\text{SNR}}{10}}\bigr)\times 10^6}\,.
\end{equation}
\end{small}

\subsubsection{Task and model setup}
To evaluate the effectiveness of our RDA design, we perform an image-classification task on the CIFAR-10 dataset \cite{krizhevsky2009learning}, using a ViT-B/16 architecture from the CLIP framework \cite{radford2021learningtransferablevisualmodels} as the representation backbone $g_{\boldsymbol{\phi}}$. The LLM served in IDA is set to DeepSeek-V3-0324 \cite{deepseekai2025deepseekv3technicalreport}.
\begin{figure}[t]
    \centering
    \includegraphics[width=0.8\linewidth]{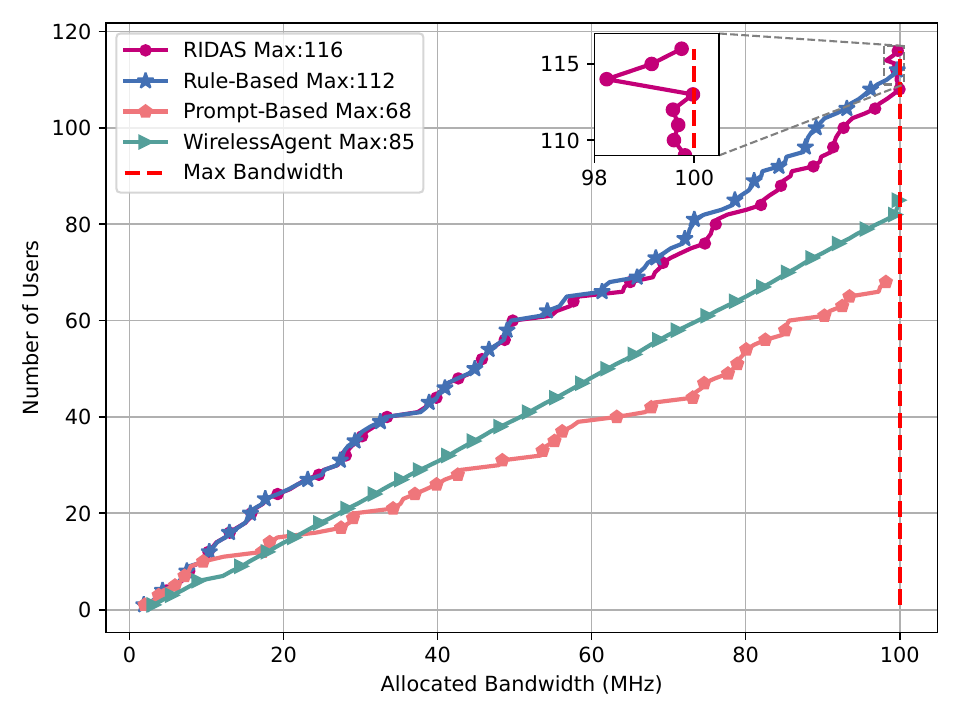}
    \caption{Number of users connected to BS of different methods with the same bandwidth.}
    \label{fig:bd_vs_use_number}
\end{figure}

\subsection{Baselines}
For fair comparisons, we employ the following baselines:
\begin{itemize}
    \item WirelessAgent \cite{tong2024wirelessagentlargelanguagemodel}: An LLM-based autonomous agent for wireless tasks.
    \item Prompt-Based: A simplified version of our method that uses a single LLM prompt for allocation, omitting the verification and reallocation stages.
    \item Rule-Based: A heuristic approach that allocates bandwidth based on optimal control parameters and an SNR-scaled code rate.
\end{itemize}

\subsection{Results}


\begin{figure}[t]
    \centering
    \includegraphics[width=0.8\linewidth]{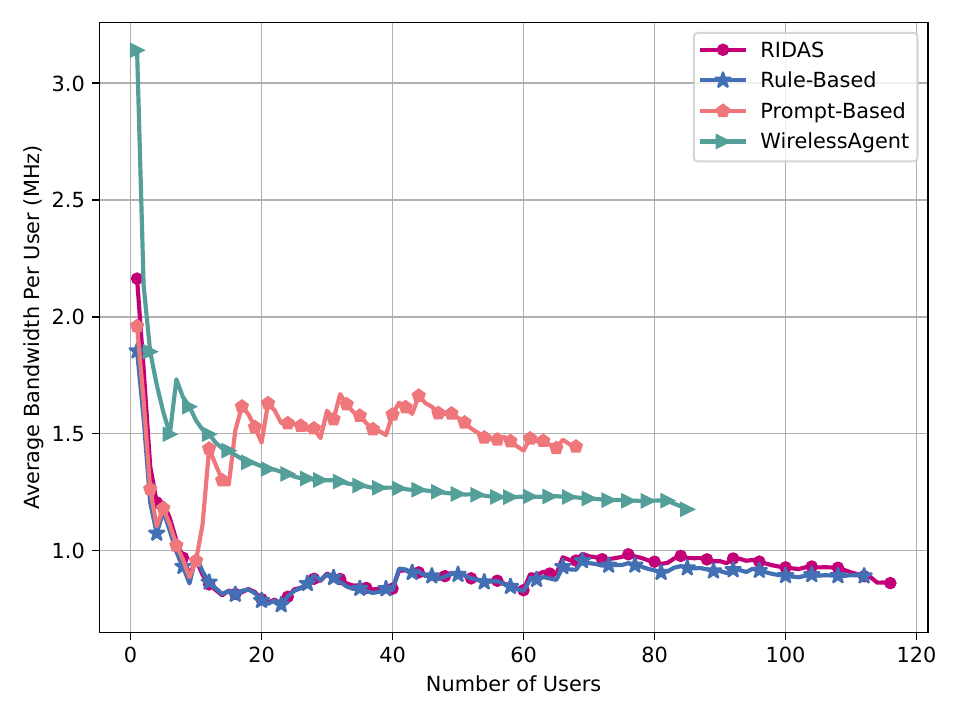}
    \caption{Average bandwidth of connected users.}
    \label{fig:ave_bandwidth}
    \vspace{-15pt}
\end{figure}

We simulate a common queue of users awaiting connection across all methods. Hence, under a fixed total bandwidth, supporting a greater number of users or, for the same number of connected users, allocating a lower average bandwidth to each user while still meeting their QoS requirements indicates more efficient resource utilization.

Fig. \ref{fig:bd_vs_use_number} shows the number of users that can be supported by the BS under different baseline methods, given an identical total bandwidth budget. As illustrated in Fig. \ref{fig:bd_vs_use_number}, when the allocated bandwidth approaches 100\% of the system capacity, the proposed RIDAS accommodates up to 116 users, compared with 112 for the rule‐based baseline, 85 for the WirelessAgent framework, and 68 for the prompt‐based scheme. 

\begin{figure*}[t]
    \centering
    \includegraphics[width=0.75\linewidth]{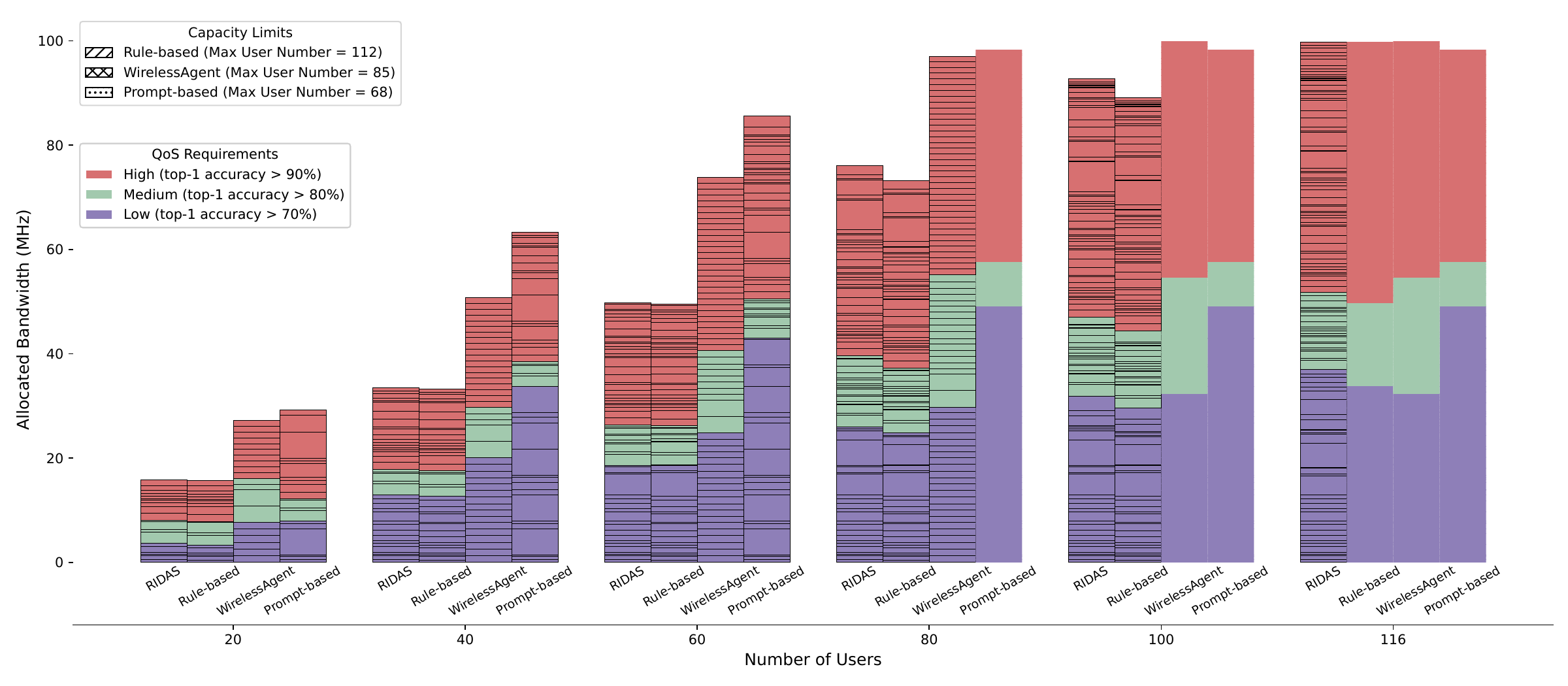}
    \caption{Details of bandwidth allocation.}
    \label{fig:details_of_bandwidth_allocation}
\end{figure*}

\begin{figure}[t]
    \centering
    \includegraphics[width=0.8\linewidth]{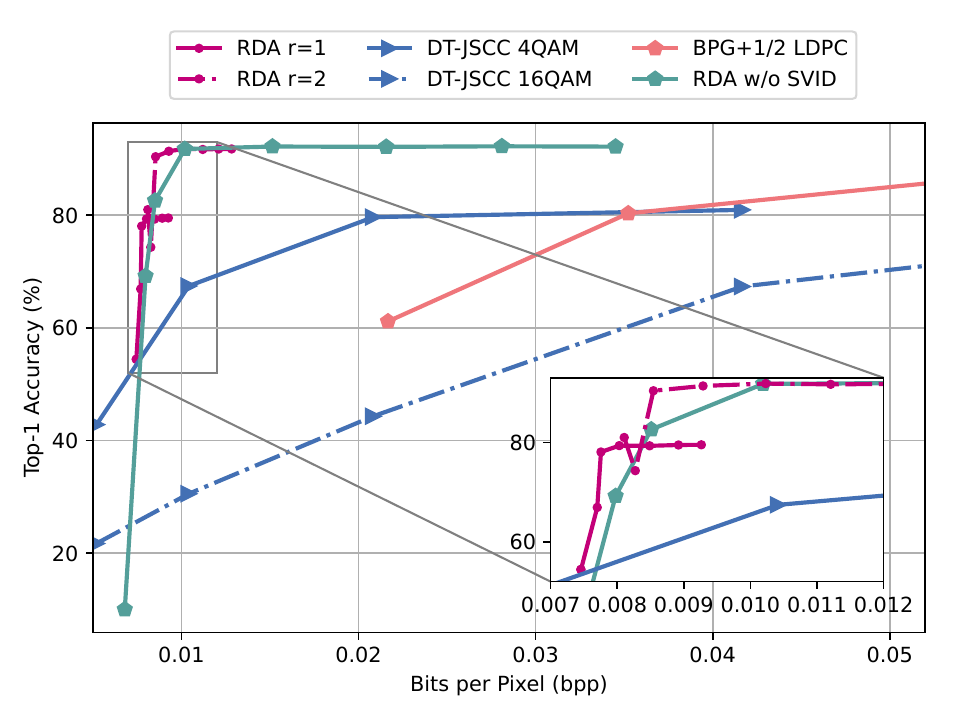}
    \caption{Accuracy versus bpp.}
    \label{fig:acc_bpp}
    \vspace{-5pt}
\end{figure}

These results indicate that the proposed IDA not only attains near-optimal control-parameter configurations but also supports a greater number of concurrent users than the rule-based scheme. This improvement stems from the fact that the rule-based method selects code rates by linearly scaling with SNR, whereas IDA adaptively determines code rates based on empirical performance data—thereby conserving bandwidth more effectively across diverse scenarios.

Furthermore, for RIDAS, when 109 users are connected, the allocated bandwidth is nearly exhausted. Upon the 110th connection attempt, IDA triggers its reallocation stage to reclaim additional bandwidth from existing RDAs, thereby freeing sufficient capacity to admit the additional user.

Fig. \ref{fig:ave_bandwidth} shows the average bandwidth allocated per user as the connected users varies. At 85 concurrent users, the mean per-user allocation is 0.968 MHz under RIDAS, compared with 0.925 MHz for Rule-Based method and and 1.175 MHz for WirelessAgent framework. These results further demonstrate that the proposed IDA distributes bandwidth more efficiently while still meeting QoS requirements of each user. 

Fig. \ref{fig:details_of_bandwidth_allocation} details the per-user bandwidth assignments for varying acceptable transmission-latency requirements across different number of connected user. As shown in Fig. \ref{fig:details_of_bandwidth_allocation} , RIDAS dynamically adapts allocation according of each RDA to its latency tolerance—users that can tolerate higher delays are provisioned with less bandwidth, whereas those requiring lower latency receive larger allocations—thereby conserving resources. By contrast, the WirelessAgent framework tends to distribute bandwidth uniformly among all users. These findings further demonstrate the ability of IDA within RIDAS to infer the intent of each agent and generate control‐parameter configurations that optimize resource utilization.

To demonstrate the effectiveness of the proposed RDA, we compare its performance against two key baselines. Fig.~\ref{fig:acc_bpp} illustrates this comparison by plotting the classification accuracy as a function of bits per pixel (bpp)\footnote{Here, $\mathrm{bpp} = \frac{\text{total transmission bits}}{3 \times \text{Height} \times \text{Width}}$, where Height and Width denote the image dimensions.}. The baselines used for this comparison are DT-JSCC~\cite{10159007}, a task-oriented communication scheme employing digital modulation, and BPG + 1/2 LDPC, which represents a conventional approach of transmitting images using BPG coding protected by a rate-1/2 LDPC code. As shown in Fig.~\ref{fig:acc_bpp}, at approximately 0.008 bpp, RDA outperforms DT-JSCC with 4-QAM and the RDA variant without SVID by over 20\% and 18\% in accuracy, respectively. These results demonstrate that RDA effectively preserves the quality of the representation—and hence the downstream task performance—even in the low-bpp regime. At higher bpp values, where representational distortion becomes negligible, accuracy of RDA approaches that of the original representation model.

\section{Conclusion}


In this work, we have introduced RIDAS, a multi-agent framework for AI-RAN that unifies low-level representation control with high-level intent interpretation via its RDA and IDA components. In our evaluation, RIDAS dynamically adjusted its control parameters in response to network conditions and user QoS requirements, thereby maximizing resource utilization. Its two-stage planning process, comprising bandwidth pre-allocation and subsequent reallocation, achieved near-optimal performance by satisfying both transmission-rate and task-performance demands. As a contribution, RIDAS offers a novel paradigm for autonomous, intent-driven RAN management and provides a promising foundation for 6G networks. Future work will extend the framework to incorporate end-to-end delay control and adapt to more complex deployment scenarios, further enhancing its practicality in real-world wireless environments.

\bibliographystyle{IEEEtran}
\bibliography{reference}
\end{document}